\theoremstyle{plain} \newtheorem{lem}{Lemma} \newtheorem{thm}{Theorem} \newtheorem{cor}{Corollary}  \theoremstyle{definition} \newtheorem{define}{Definition} \theoremstyle{remark} \newtheorem{ex}{Example}  \DeclareMathOperator{\cl}{cl} \DeclareMathOperator{\interior}{int} \title[Existence of continuous euclidean embeddings]{Existence of continuous euclidean embeddings for a weak class of orders} \author{Lawrence Carr} \address{Department of Economics, Princeton University, Princeton, NJ 08544} \thanks{This work was funded by an undergraduate research grant from the Princeton University Department of Economics.} \date{August 2, 2015} 
\begin{document} \begin{abstract} We prove that if $X$ is a topological space that admits Debreu's classical utility theorem (eg.\ $X$ is separable and connected, second countable, etc.), then order relations on $X$ satisfying milder completeness conditions can be continuously embedded in $\mathbb R^I$ for $I$ some index set. In the particular case where $X$ is a compact metric space, this closes a conjecture of Nishimura \& Ok (2015). We also show that when $\mathbb R^I$ is given a non-standard partial order coinciding with Pareto improvement, the analogous embedding theorem fails to hold in the continuous case. \end{abstract} \maketitle \section{Existence of the embedding} Following Nishimura and Ok~\cite{ok}, we will be principally interested in the conditions under which a binary relation $P$ on a topological space $X$ admits a (continuous) Euclidean embedding in the sense that there exists a collection $\mathcal{V}$ of (continuous) maps $v:X\to\mathbb R$ such that \begin{align} (x,y)\in P\quad\iff\quad \exists v\in\mathcal{V},v(x)\geq v(y) \end{align} for every $x,y\in X$. When stated in this manner, the definition clarifies the interest in this construction within the study of choice theory. In the economic setting, $P$ represents an agent's preferences for goods and can be described by the family $\mathcal V$ of utility functions~\cite{evren}. In the language of mathematics, $\mathcal V=\{v_\alpha\}_{\alpha\in I}$ can be thought of as an embedding\footnote{Since the underlying set will always be $X$, we will not be pedantic about writing relations as a pair $(X,R)$.} of $(X,R)$ into $(\mathbb R^I,\geq)$, preserving both the order and topological structures (where $\mathbb R^I$ is given its usual product topology and the order $(x_\alpha)_{\alpha\in I}\geq(y_\alpha)_{\alpha\in I}\iff \exists\alpha\in I,x_\alpha\geq y_\alpha$). In order to prove that such embeddings exist, we must be more specific about the structure of the binary relation $R$. \begin{define} We always interpret a relation $R$ over a set $X$ as a subset of $X\times X$. If $X$ has a topology, then $R$ is automatically given the subspace topology. We say that a relation $R$ is \textit{continuous} if it is topologically closed relative to $X\times X$. We write $R^*$ for the reflection of $R$ across $\Delta_X$ (ie.\ the dual of $R$ viewed as a category), where $\Delta_X$ denotes the diagonal of $X\times X$. We write $x\sim y$ to mean $(x,y)\notin R\cup R^*$. A \textit{partial order} is a reflexive ($\Delta_X\subset R$), antisymmetric $(R\cap R^*\subset\Delta_X$), and transitive ($(x,y)\in R\land (y,z)\in R\implies (x,z)\in R$) relation. A \textit{linear order} is a complete ($R\cup R^*=X\times X$) partial order. \end{define} Throughout this paper it will be convenient to reference both the ``weak'' and ``strong'' versions of a relation which will generally be denoted by $P$ and $Q$ respectively; for instance the usual order on $\mathbb R$ can be thought of either as $P=\{(x,y)\in\mathbb R\times\mathbb R:x\geq y\}$ or $Q=\{(x,y)\in\mathbb R\times\mathbb R:x>y\}$. The choice is immaterial thanks to the duality $P=(Q^*)^c$, so we will freely make use of both notions. The weak class of orders we shall consider (where a higher dimensional embedding is necessary) is characterized by asymmetry ($\Delta_X\cap Q=\emptyset$) and transitivity of $Q$ or, dually, completeness and negative transitivity ($P^*$ is transitive) of $P$. The stronger class (where one-dimensional embeddings exist) has asymmetry and negative transitivity of $Q$ or completeness and transitivity of $P$. It is easy to verify that the strong conditions imply the weak conditions. \begin{lem}[Dushnik \& Miller \cite{1941}]\label{decomp} Let $P$ be a partial order on an arbitrary set $X$. Then there exists a collection $\{P_\alpha\}_{\alpha\in I}$ of linear orders on $X$ which realize $P$, ie. \begin{align*} P=\bigcap_{\alpha\in I}P_\alpha. \end{align*} Furthermore, there is such a realization of $P$ with the additional feature that whenever $x\sim y$, there exists an $\alpha$ in $I$ such that $(x,y)\in P_\alpha$. \end{lem}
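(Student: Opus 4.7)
The plan is to establish Szpilrajn's extension lemma in the following form: given any partial order $P$ on $X$ and any incomparable pair $a\sim b$, there exists a linear order $L\supset P$ with $(a,b)\in L$. The desired family $\{P_\alpha\}_{\alpha\in I}$ is then built by indexing over ordered incomparable pairs, assigning to each $(x,y)$ with $x\sim y$ a linear extension placing $x$ above $y$. The intersection identity $P=\bigcap_\alpha P_\alpha$ reduces to antisymmetry of each linear $P_\alpha$.

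For the extension lemma, I would first show that the transitive closure of $P\cup\{(a,b)\}$, which (using transitivity of $P$ and the fact that $(b,a)\notin P$, so the edge $(a,b)$ can be used at most once) simplifies to
\[ P':=P\cup\{(x,y):(x,a)\in P\text{ and }(b,y)\in P\}, \]
is itself a partial order. Reflexivity and transitivity are direct. For antisymmetry, if both $(x,y)$ and $(y,x)$ lie in $P'$, a short case analysis on which of them are ``new'' pairs yields, via transitivity of $P$, a chain from $b$ to $a$ inside $P$; this forces $(b,a)\in P$, contradicting $a\sim b$. Next I would apply Zorn's lemma to the poset of partial orders on $X$ containing $P'$ under inclusion (chains have upper bounds via union, since the three axioms pass to chain-unions) to extract a maximal extension $L$. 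If $L$ were not complete, there would be a pair incomparable in $L$, and the same enlargement construction applied to $L$ would strictly enlarge $L$, contradicting maximality. Hence $L$ is a linear extension of $P$ with $(a,b)\in L$.

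For the assembly, let $I:=\{(x,y)\in X\times X:x\sim y\}$ (throwing in a single arbitrary linear extension of $P$ if this set is empty), and for each $\alpha=(x,y)\in I$ let $P_\alpha$ be a linear extension of $P$ with $(x,y)\in P_\alpha$ furnished by the extension lemma. Since each $P_\alpha$ extends $P$, one containment is immediate. For the reverse, suppose $(x,y)\notin P$; reflexivity of $P$ forces $x\neq y$. If $(y,x)\in P$, then antisymmetry of every linear $P_\alpha$ rules out $(x,y)\in P_\alpha$. Otherwise $x\sim y$, and the choice $\alpha=(y,x)$ gives $(y,x)\in P_\alpha$, once again excluding $(x,y)$ by antisymmetry. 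In either case $(x,y)\notin\bigcap_\alpha P_\alpha$, and the final ``furthermore'' clause is built into the construction.

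The main obstacle will be the antisymmetry verification in the enlargement step of Szpilrajn: this is the one place where the hypothesis $a\sim b$ is genuinely used, and every other piece of the argument is either a standard Zorn's lemma application or routine bookkeeping.
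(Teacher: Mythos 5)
Your proof is correct: the Szpilrajn-style enlargement, the Zorn's lemma maximality argument, and the assembly over incomparable pairs (including the degenerate case where $P$ is already linear) all check out. The paper itself gives no proof of this lemma, citing Dushnik \& Miller directly, and your argument is essentially the standard linear-extension proof of that cited result, so there is nothing to flag.
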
 Observe that for an asymmetric and transitive relation $Q$ on a set $X$, the relation $P=Q\cup\Delta_X$ is a partial order so Lemma \ref{decomp} provides a collection $\{P_\alpha\}_{\alpha\in I}$ of linear orders realizing $P$, and thus $\{Q_\alpha\}_{\alpha\in I}$ where $Q_\alpha=P_\alpha\setminus\Delta_X$ is a family of asymmetric negatively transitive relations which realize $Q$. For $R$ a relation on a topological space $X$, denote by $\interior R$ and $\cl R$ the relations characterized by, respectively, its topological interior and closure relative to the product topology of $X\times X$. \begin{lem}\label{closed} If $Q$ is asymmetric and negatively-transitive, so is its interior. \end{lem}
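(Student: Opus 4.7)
The plan is to handle the two properties separately. Asymmetry is immediate from $\interior Q \subseteq Q$: indeed $\Delta_X \cap \interior Q \subseteq \Delta_X \cap Q = \emptyset$.

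For negative transitivity, I would fix $(x, z) \in \interior Q$ together with a witnessing open rectangle $U \times V \subseteq Q$ (where $x \in U$ and $z \in V$), and take an arbitrary $y \in X$; the goal is to show $(x, y) \in \interior Q$ or $(y, z) \in \interior Q$. The argument would proceed by contradiction: assuming neither holds, for every open neighborhood $W$ of $y$, the failure $U \times W \not\subseteq Q$ produces $u_1 \in U$, $w_1 \in W$ with $(u_1, w_1) \notin Q$, and the failure $W \times V \not\subseteq Q$ produces $w_2 \in W$, $v_2 \in V$ with $(w_2, v_2) \notin Q$.

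The heart of the argument is then three applications of negative transitivity of $Q$ to the pair $(u_1, v_2) \in U \times V \subseteq Q$. First, with intermediate $w_1$: since $(u_1, w_1) \notin Q$, the alternative $(w_1, v_2) \in Q$ must hold. Symmetrically, with intermediate $w_2$, the failure $(w_2, v_2) \notin Q$ forces $(u_1, w_2) \in Q$. A third application, now to $(u_1, w_2) \in Q$ with intermediate $w_1$, again uses $(u_1, w_1) \notin Q$ to force $(w_1, w_2) \in Q$. Hence every open neighborhood of $y$ contains points $w_1, w_2$ with $(w_1, w_2) \in Q$, so every neighborhood of $(y, y) \in X \times X$ meets $Q$; that is, $(y, y) \in \cl Q$.

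The hard part will be converting $(y, y) \in \cl Q$ into an outright contradiction with the asymmetry condition $(y, y) \notin Q$. The most natural route is to invoke some closedness of $Q$ so that $\cl Q = Q$, which yields the contradiction directly; absent such regularity, an additional topological step is needed. This is the main obstacle, and the point where any hidden hypothesis or an additional continuity condition on $Q$ would be brought to bear to complete the proof.
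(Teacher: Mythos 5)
Your reduction is carried out correctly: asymmetry is immediate, the three applications of negative transitivity are exactly right, and they show that if negative transitivity of $\interior Q$ fails at $(x,z)$ with intermediate $y$, then every neighborhood $W$ of $y$ contains a pair $(w_1,w_2)\in Q$, i.e.\ $(y,y)\in\cl Q$. But the obstacle you flag at the end is a genuine gap, and it cannot be closed: without some closedness hypothesis the lemma is false. Take $X=\mathbb{R}$, let $u(t)=0$ for $t\le-1$, $u(t)=1$ for $t\ge1$, $u(t)=2$ for rational $t\in(-1,1)$, $u(t)=-1$ for irrational $t\in(-1,1)$, and set $Q=\{(a,b):u(a)>u(b)\}$, which is asymmetric and negatively transitive. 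Then $(2,-2)\in\interior Q$ (witnessed by the box $(1,3)\times(-3,-1)$), but $(2,0)\notin\interior Q$ (rationals near $0$ have $u=2>1$) and $(0,-2)\notin\interior Q$ (irrationals near $0$ have $u=-1<0$), so $\interior Q$ is not negatively transitive. Here $y=0$ satisfies $(y,y)\in\cl Q\setminus Q$, which is precisely the configuration your contradiction would have to rule out and cannot; so no further topological argument can finish your proof in the stated generality, and your fallback (assuming $\cl Q=Q$) is not available where the lemma is applied, since the $Q_\alpha$ produced by the Dushnik--Miller realization carry no continuity.

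For comparison, the paper proves the dual statement that $\cl(Q^c)$ is transitive by taking nets $(x_\alpha,y_\alpha)\to(x,y)$ and $(y'_\beta,z'_\beta)\to(y,z)$ in $Q^c$ and splicing them into a net $(x_\alpha,z'_\beta)\to(x,z)$; it never verifies that $(x_\alpha,z'_\beta)\in Q^c$. By transitivity of $Q^c$ that would require $(y_\alpha,y'_\beta)\in Q^c$, i.e.\ exactly the condition that no neighborhood of $y$ contains a $Q$-increasing pair --- the same point at which your argument stalls, and the point at which the counterexample above defeats both arguments. So your attempt is incomplete, but the missing step is not an oversight on your part: it is the substantive content of the lemma, and as stated the lemma does not have it.
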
 \begin{proof} It is easy to verify that $Q^c$ must be complete and transitive. $\cl(Q^c)$ can be thought of as the collection of limits of nets in $Q^c$ which converge in $X\times X$. Take $x,y,z\in X$ such that $(x,y)\in\cl(Q^c)$ and $(y,z)\in\cl(Q^c)$ and nets $(p_\alpha)_{\alpha\in A}=(x_\alpha,y_\alpha)_{\alpha\in A}$ and $(q_\beta)_{\beta\in B}(y_\beta',z_\beta')_{\beta\in B}$ that converge in $X\times X$ to $(x,y)$ and $(y,z)$ respectively. Let $U\subset X\times X$ be an open set around $r=(x,z)$. Then the product topology provides open sets $U_x,U_z\subset X$ with $U\supset U_x\times U_z$. Let $V\subset X$ be an arbitrary open set around $y$. Since $p_\alpha=(x_\alpha,y_\alpha)\to(x,y)$, $p_\alpha$ is eventually in $U_x\times V$, so $x_\alpha$ is eventually in $U_x$. By the same argument, $z_\beta'$ is eventually in $U_z$. One can confirm that $A\times B$ is a directed set, where $(p_\alpha,q_\beta)\geq(p_{\alpha'},q_{\beta'})$ if and only if $p_\alpha\geq p_{\alpha'}$ and $q_\beta\geq q_{\beta'}$. It can then direct the net $(r_\gamma)_{\gamma\in\Gamma}$, where $\Gamma=A\times B$ and $r_\gamma=(p_\alpha,q_\beta)\mapsto(x_\alpha,z_\beta')$. By construction $r_\gamma\to r=(x,z)$. Thus $(x,z)\in\cl(Q^c)$ and this proves transitivity. Completeness is trivial because $Q^c$ is contained in its closure. Thus $\interior(Q)=\cl(Q^c)^c$ is asymmetric and negatively-transitive. \end{proof} This lemma puts us in position to state and prove the main theorem. \begin{define} A topological space $X$ is a \textit{Debreu space} if every complete, transitive, continuous order can be continuously embedded in $(\mathbb R,\geq)$. \end{define} Several sufficient conditions for $X$ to be a Debreu space are known in the literature. For example: \begin{itemize} \item separability and connectivity~\cite{eilenberg} \item second-countability~\cite{debreu} \item separability and local-connectedness~\cite{candeal}. \end{itemize} \begin{thm}\label{conj} Let $P$ be a continuous binary relation on a Debreu space $X$. Then $P$ is complete and negatively transitive if and only if it is continuously embeddable in $(\mathbb R^I,\geq)$. \end{thm}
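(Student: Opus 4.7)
The plan is to prove the two directions separately, with $(\Leftarrow)$ being an immediate verification and $(\Rightarrow)$ requiring Lemmas~\ref{decomp} and~\ref{closed} together with Debreu's theorem. For $(\Leftarrow)$, suppose a continuous family $\mathcal V$ realizes $P$. For any $v\in\mathcal V$ we have $v(x)\geq v(y)$ or $v(y)\geq v(x)$, so $(x,y)$ or $(y,x)$ lies in $P$, giving completeness; and if $(x,y)\notin P$ and $(y,z)\notin P$ then $v(x)<v(y)<v(z)$ for every $v\in\mathcal V$, so $(x,z)\notin P$ and $P$ is negatively transitive. These are exactly the properties of the $\geq$ relation on $\mathbb R^I$ pulled back through the embedding.

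For $(\Rightarrow)$, suppose $P$ is continuous, complete, and negatively transitive, and set $Q=(P^*)^c$, which is asymmetric and transitive and (since $P$ is closed) open. Then $Q\cup\Delta_X$ is a partial order, so Lemma~\ref{decomp} yields linear orders $\{P_\alpha\}_{\alpha\in I}$ with $\bigcap_\alpha P_\alpha=Q\cup\Delta_X$. The strict part $Q_\alpha:=P_\alpha\setminus\Delta_X$ of each linear order is asymmetric and negatively transitive, so Lemma~\ref{closed} gives that $\interior Q_\alpha$ is too. Setting
\begin{equation*}
P_\alpha':=\bigl((\interior Q_\alpha)^*\bigr)^c,
\end{equation*}
each $P_\alpha'$ is complete, transitive, and closed (this last because $\interior Q_\alpha$ is open, so its reflection is open and its complement is closed). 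Debreu's theorem, applied on the Debreu space $X$, then supplies continuous $v_\alpha:X\to\mathbb R$ realizing $P_\alpha'$.

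The central remaining step is to show $P=\bigcup_\alpha P_\alpha'$, which reduces to the identity $\bigcap_\alpha\interior Q_\alpha=Q$. Subtracting $\Delta_X$ from the Dushnik--Miller decomposition yields $\bigcap_\alpha Q_\alpha=Q$, whence $\bigcap_\alpha\interior Q_\alpha\subset Q$. For the reverse, $Q$ is open and contained in each $Q_\alpha$, so $Q\subset\interior Q_\alpha$ for every $\alpha$. Combining, $(x,y)\in P$ if and only if some $v_\alpha(x)\geq v_\alpha(y)$, which is precisely the continuous embedding $x\mapsto(v_\alpha(x))_{\alpha\in I}$ into $(\mathbb R^I,\geq)$. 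The main obstacle I anticipate is not a deep technicality but careful bookkeeping of reflections, complements, and the diagonal, and in particular verifying that the passage from $Q_\alpha$ to $\interior Q_\alpha$ loses exactly the right pairs so that the union of the weakenings $P_\alpha'$ still equals $P$; once Lemma~\ref{closed} and the openness of $Q$ are in hand, this is a routine set-theoretic check.
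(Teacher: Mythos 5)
Your proposal is correct and follows essentially the same route as the paper: apply the Dushnik--Miller decomposition to $Q\cup\Delta_X$, pass to interiors, use the openness of $Q$ together with $\interior\bigcap Q_\alpha\subset\bigcap\interior Q_\alpha\subset\bigcap Q_\alpha$ to get $Q=\bigcap_\alpha\interior Q_\alpha$, and invoke Lemma~\ref{closed} plus Debreu's theorem on each $\interior Q_\alpha$. Your only additions are making explicit the dualization $P_\alpha'=((\interior Q_\alpha)^*)^c$ before applying Debreu's theorem and spelling out the easy reverse implication, both of which the paper leaves implicit.
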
 \begin{proof} The proof of the reverse statement is straightforward. For the forward statement, Lemma \ref{decomp} and the note following it provide a collection of asymmetric and negatively-transitive relations $\{Q_\alpha\}_{\alpha\in I}$ such that $Q=\bigcap_{\alpha\in I}Q_\alpha$. It is a basic fact of topology that $\interior Q=\interior\bigcap_{\alpha\in I}Q_\alpha\subset\bigcap_{\alpha\in I}\interior Q_\alpha$. Since $P$ is continuous, $Q$ is open in $X\times X$ and $Q=\interior Q$. Since $\interior Q_\alpha\subset Q_\alpha$, the reverse containment also holds. Thus \begin{align*} Q=\bigcap_{\alpha\in I}\interior Q_\alpha. \end{align*} By Lemma \ref{closed}, each $\interior Q_\alpha$ for $\alpha\in I$ is asymmetric and negatively-transitive on $X$. Thus, for each, we can apply Debreu's theorem to find a continuous $v_\alpha:X\to\mathbb R$ such that $\interior Q_\alpha=\{(x,y)\in X\times X:v_\alpha(x)>v_\alpha(y)\}$, which proves the theorem when combined with the display equation. \end{proof} Hence we can prove a conjecture of Nishimura \& Ok~\cite{ok}. \begin{cor} Let $P$ be a continuous binary relation on a compact metric space $X$. Then $P$ is complete and negatively-transitive if and only if it is continuously embeddable in $\mathbb R^I$. \end{cor}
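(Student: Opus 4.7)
The plan is to reduce the corollary directly to Theorem \ref{conj} by showing that every compact metric space is a Debreu space. Since Theorem \ref{conj} already establishes the desired equivalence on an arbitrary Debreu space, the only work remaining for the corollary is to verify that compact metric spaces belong to that class.

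First, I would recall the standard fact that every compact metric space $X$ is second countable. One can argue this by fixing, for each $n\in\mathbb N$, a finite cover of $X$ by open balls of radius $1/n$ (which exists by compactness and total boundedness of metric spaces) and taking the union over $n$; this produces a countable base for the topology. Alternatively, one notes that compactness implies separability in metric spaces, and any separable metric space is second countable.

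Second, I would invoke Debreu's classical utility theorem, cited in the bulleted list preceding Theorem \ref{conj}, which states that any second countable topological space is a Debreu space: every complete, transitive, continuous order on such a space embeds continuously in $(\mathbb R,\geq)$. Combining these two observations shows that $X$ satisfies the hypothesis of Theorem \ref{conj}.

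Finally, I would apply Theorem \ref{conj} to the given continuous binary relation $P$ on $X$. The forward direction gives a continuous embedding into $(\mathbb R^I,\geq)$ whenever $P$ is complete and negatively transitive, and the reverse direction is the straightforward observation that completeness and negative transitivity are inherited from $(\mathbb R^I,\geq)$ under any embedding. I do not anticipate any substantive obstacle here, since every step is either a classical fact about compact metric spaces or an immediate appeal to the theorem just proved; the corollary is essentially a restatement of Theorem \ref{conj} specialized to a concrete setting where the Debreu hypothesis is automatic.
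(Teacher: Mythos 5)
Your proposal is correct and follows the paper's argument essentially verbatim: the paper likewise notes that compact metric spaces are separable, hence second countable, so Debreu's original theorem makes $X$ a Debreu space, and the corollary then follows immediately from Theorem \ref{conj}.
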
 \begin{proof} This follows immediately from Theorem \ref{conj} and Debreu's original version of the theorem~\cite{debreu}, as separable metric spaces have countable base. \end{proof} \section{Some implications of the main theorem} \begin{cor}\label{tauorder} For $P$ a partial order on a set $X$ and $\tau$ any topology on $X$, let the $\tau$-order dimension $d_\tau(P)$ be the cardinality of the minimal realization of $P$ by linear orders which are open in the topology $\tau$ on $X$. Then \begin{align} d_\tau(P)=d(P). \end{align} \end{cor}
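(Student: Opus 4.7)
The plan is to sandwich $d_\tau(P)$ between $d(P)$ on both sides. The inequality $d(P)\le d_\tau(P)$ is immediate: every realization of $P$ by $\tau$-open linear orders is, a fortiori, a realization by linear orders, so the minimum defining $d(P)$ is taken over a larger collection of realizations and is therefore no larger.

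For the reverse inequality $d_\tau(P)\le d(P)$, I will mirror the argument of Theorem~\ref{conj} verbatim. Starting from a Dushnik--Miller realization $\{P_\alpha\}_{\alpha\in I}$ (Lemma~\ref{decomp}) with $|I|=d(P)$, I pass to the strict versions $Q_\alpha=P_\alpha\setminus\Delta_X$ (each asymmetric and negatively transitive) and replace each by its interior $\interior Q_\alpha$. By Lemma~\ref{closed} the interior remains asymmetric and negatively transitive, and the sandwich
\begin{align*}
\interior Q \;\subseteq\; \bigcap_{\alpha\in I}\interior Q_\alpha \;\subseteq\; \bigcap_{\alpha\in I} Q_\alpha \;=\; Q,
\end{align*}
together with continuity of $P$ forcing $Q=\interior Q$, guarantees that $\{\interior Q_\alpha\}_{\alpha\in I}$ realizes $Q$ by $\tau$-open asymmetric negatively transitive relations. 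Adjoining $\Delta_X$ back produces a realization of $P$ of cardinality $|I|=d(P)$ by linear orders with $\tau$-open strict parts, so $d_\tau(P)\le d(P)$.

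Since the core construction is exactly that of Theorem~\ref{conj}, I do not anticipate any genuinely new obstacle. The one point requiring care is interpretive: a reflexive relation can be $\tau$-open only in the discrete case, so "linear order which is open in $\tau$" must be read as a linear order whose strict part is $\tau$-open in $X\times X$, consistent with the continuity convention (closed weak $=$ open strict) used throughout the paper. Under this reading, and with the implicit continuity of $P$ inherited from the setting of Theorem~\ref{conj}, the two bounds combine to yield $d_\tau(P)=d(P)$.
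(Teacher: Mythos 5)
Your overall route is the one the paper intends: the corollary is stated with no separate proof, and the expected argument is exactly your two bounds --- $d(P)\le d_\tau(P)$ because every $\tau$-open realization is in particular a realization, and $d_\tau(P)\le d(P)$ by running the interior construction of Theorem~\ref{conj} on a minimum-cardinality Dushnik--Miller realizer. Your sandwich $\interior Q\subseteq\bigcap_{\alpha}\interior Q_\alpha\subseteq\bigcap_\alpha Q_\alpha=Q$ and the appeal to Lemma~\ref{closed} are precisely the mechanics of Theorem~\ref{conj}, and you are right that continuity of $P$ (openness of $Q$) must be read into the hypotheses for the upper bound to go through.

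There is, however, a genuine gap in your final step. Lemma~\ref{closed} only guarantees that $\interior Q_\alpha$ is asymmetric and negatively transitive, i.e.\ the strict part of a \emph{complete preorder}; it is not in general the strict part of a linear order in the paper's sense. Adjoining $\Delta_X$ does give a reflexive, antisymmetric, transitive relation, but completeness is lost: whenever $x\ne y$ with $(x,y)\in Q_\alpha\setminus\interior Q_\alpha$ (which typically occurs, since passing to the interior strictly shrinks $Q_\alpha$), neither $(x,y)$ nor $(y,x)$ lies in $\interior Q_\alpha\cup\Delta_X$. So what your construction yields is a realization of $P$ by $d(P)$-many $\tau$-open strict \emph{weak} orders, not by linear orders as the corollary literally demands. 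This is not cosmetic: if a linear order has $\tau$-open strict part $Q_\alpha$, then $Q_\alpha$ and $Q_\alpha^*$ are disjoint open sets covering $(X\times X)\setminus\Delta_X$, so no such linear order exists when $(X\times X)\setminus\Delta_X$ is connected --- e.g.\ $X=\mathbb R^2$ with its usual topology and $P$ the coordinatewise order, where $d(P)=2$ but the $\tau$-open quantity is vacuous. Hence under the literal reading the asserted identity can fail, and your argument cannot repair that; it is valid only if ``linear orders open in $\tau$'' is read as ``complete transitive relations with $\tau$-open strict part,'' which is exactly what Theorem~\ref{conj}'s construction delivers. The looseness originates in the paper's own statement, but your closing sentence claiming the realizers are linear orders overstates what the interiors give you, and this discrepancy should be flagged rather than absorbed silently.
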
 This corollary may simplify the problem of finding the dimension of an order on a set which admits a natural topology. In particular, the non-continuous dimension is equal to any continuous dimension. The following example of a semiorder is related to that first noted by Luce~\cite{luce} as a situation in which usual utility theory is inadequate. Our main theorem resolves the issue with a continuous order embedding that we can explicitly write down. \begin{ex}\label{sugar} An agent strictly prefers the larger of two quantities between which he can distinguish, but he can distinguish only between quantities which differ by an amount greater than some fixed $\epsilon>0$. That is to say $X=\mathbb{R}$ and \begin{align*} (x,y)\in P\quad\iff\quad x+\epsilon\geq y. \end{align*} It is easy to see that although $P$ is not transitive, these preferences do satisfy the conditions of Theorem \ref{conj}. Indeed, it is easy to verify that $P$ has the continuous order embedding \begin{align*} \mathcal{V}=\{f(\frac{x-\alpha}{\epsilon}):\alpha\in\mathbb{R}\} \end{align*} where $f(x)=x+(1-x^2)\chi(x)$ and $\chi$ is the indicator function of the interval $(-1,1)$. \end{ex} \begin{cor} If $P$ is a complete, negatively-transitive, continuous relation on $X$ which is compact (connected), then $P$ admits a Hasse diagram in which the collection of points is compact (connected) in $\mathbb{R}^2$. \end{cor}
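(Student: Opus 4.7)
The plan is to extract from Theorem~\ref{conj} a continuous map $\psi\colon X\to\mathbb R^2$ whose second coordinate strictly preserves $Q$, and then let the compactness or connectedness of $\psi(X)$ follow automatically. Unpacking the proof of Theorem~\ref{conj}, one obtains continuous functions $\{v_\alpha\}_{\alpha\in I}$ on $X$ with
\[
Q=\bigcap_{\alpha\in I}\{(x,y)\in X\times X:v_\alpha(x)>v_\alpha(y)\},
\]
so in particular $xQy$ implies $v_\alpha(x)>v_\alpha(y)$ for every $\alpha$: each $v_\alpha$ is a continuous weak utility that strictly separates $Q$-comparable pairs by height.

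Fix one such $v_{\alpha_0}$ to serve as the vertical (height) axis of the Hasse diagram, and pick any continuous $h\colon X\to\mathbb R$ for the horizontal layout---for instance a second $v_{\alpha_1}$, or any continuous function chosen to spread out the drawing. Define $\psi(x)=(h(x),v_{\alpha_0}(x))$. The map $\psi$ is continuous, so $\psi(X)\subset\mathbb R^2$---the collection of points of the Hasse diagram---is compact whenever $X$ is compact and connected whenever $X$ is connected, since compactness and connectedness are preserved under continuous maps.

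The main subtlety is interpretive rather than technical: one must verify that a planar placement whose vertical coordinate strictly preserves $Q$ qualifies as a Hasse diagram of $P$. Granted this, the argument is complete, as Theorem~\ref{conj} already supplies the required continuous functions and the topological transfer is automatic. I expect no deeper obstacle; any difficulty will lie in formalizing the conventions of the Hasse diagram for the weak class (where there may be no covering pairs in the strict sense), but the corollary asserts only a property of the underlying point set and not of any edge structure.
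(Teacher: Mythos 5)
Your proposal is correct and takes essentially the same route as the paper: both invoke Theorem~\ref{conj} to get continuous utilities and then realize the Hasse diagram's point set as the image of $X$ under a continuous map into $\mathbb{R}^2$ whose vertical coordinate strictly increases along $Q$, so compactness/connectedness transfer automatically. The only cosmetic difference is that the paper projects the full embedding $f:X\to\mathbb{R}^{d(P)}$ onto a $2$-plane through the identity line (so the height aggregates all the $v_\alpha$), whereas you use a single $v_{\alpha_0}$ as the height and an arbitrary continuous horizontal coordinate.
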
 \begin{proof} Theorem \ref{conj} gives us a continuous map $f:X\to\mathbb{R}^{d(P)}$ such that $(x,y)\in Q$ if and only if $f(x)>f(y)$, where we compare vectors coordinate-wise. Let $\varphi:\mathbb{R}^{d(P)}\to\mathbb{R}^2$ be the projection onto a 2-plane through the identity line in $\mathbb{R}^{d(P)}$. Clearly $\varphi\circ f$ is continuous so it preserves compactness and connectedness, and maps $X$ to the appropriate Hasse diagram. \end{proof} It follows from Theorem 4.1 in \cite{1941} that for any cardinality $\kappa$, one can construct a complete, negatively-transitive order $P$ that cannot be non-continuously embedded in $(\mathbb R^I,\geq)$ for any $|I|<\kappa$. We might expect, however, some statement of minimality for \textit{continuous} representations, especially when $X$ is assumed to be separable. It is evident from Example~\ref{sugar} that not every uncountable index set $I$ such that $Q$ is embeddable in $\mathbb R^I$ has a countable subset such that the same embedding holds, as each $v\in\mathcal{V}$ contributes a unique point on the line $y=x+\epsilon$. One might still hope that there is a different family $\mathcal V$ which permits an embedding in $\mathbb R^\omega$. Alas, despite its simplicity, Example~\ref{sugar} shows us that this need not be the case. \begin{thm} The relation in Example~\ref{sugar} cannot be continuously embedded in $(\mathbb R^I,\geq)$ for any countable $I$. \end{thm}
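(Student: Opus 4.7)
The plan is to assume for contradiction that $\mathcal V = \{v_n\}_{n\in\mathbb{N}}$ is a countable continuous embedding and derive a contradiction via Baire category. First I would show that for every $y\in\mathbb{R}$ there exists $n$ with $v_n(y)=v_n(y+\epsilon)$: since $(y+\epsilon+\delta,y)\in Q$ for every $\delta>0$, we have $v_n(y+\epsilon+\delta)>v_n(y)$ for all $n$, so letting $\delta\to 0^+$ and using continuity gives $v_n(y+\epsilon)\geq v_n(y)$ for every $n$; combining this with the reverse inequality guaranteed by $(y,y+\epsilon)\in P$ yields the equality. Setting $F_n:=\{y:v_n(y)=v_n(y+\epsilon)\}$, each $F_n$ is closed by continuity and $\mathbb{R}=\bigcup_n F_n$.

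Next I would establish the structural disjointness $F_n\cap(F_n+\epsilon)=\emptyset$: if both $y$ and $y+\epsilon$ lay in $F_n$ then $v_n(y)=v_n(y+\epsilon)=v_n(y+2\epsilon)$, contradicting the strict inequality $v_n(y+2\epsilon)>v_n(y)$ coming from the distance-$2\epsilon$ pair. The Baire category theorem then produces some $F_{n_0}$ with nonempty interior, containing an open interval $(a,b)$; the disjointness forces $b-a<\epsilon$, and applying the strict distance-$>\epsilon$ inequality to $(y_2+\epsilon,y_1)$ for $y_1<y_2$ in $(a,b)$ together with $v_{n_0}(y_2+\epsilon)=v_{n_0}(y_2)$ shows that $v_{n_0}$ is strictly increasing on $(a,b)$.

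For the final contradiction I would apply Baire a second time to the cover $\mathbb{R}\times(0,\epsilon)=\bigcup_n C_n$, where $C_n:=\{(x,r):v_n(x)\geq v_n(x+r)\}$ is closed and the equality follows from $(x,x+r)\in P$ whenever $r<\epsilon$. Density of $\bigcup_n C_n^\circ$ in the subregion $\{r\in(\epsilon/2,\epsilon)\}$ yields an open rectangle $(a',b')\times(r_1,r_2)\subseteq C_{n_1}$ with $r_2>\epsilon/2$. For $r,r'\in(r_1,r_2)$ with $x+r\in(a',b')$, iterating the rectangle inequality gives $v_{n_1}(x)\geq v_{n_1}(x+r)\geq v_{n_1}(x+r+r')$; choosing the rectangle so that $r_2+(b'-a')>\epsilon$ lets us arrange $r+r'>\epsilon$, and then the strict inequality $v_{n_1}(x+r+r')>v_{n_1}(x)$ forced by distance exceeding $\epsilon$ contradicts this chain. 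The main obstacle is arranging for the Baire rectangle to simultaneously satisfy $r_2>\epsilon/2$ and $b'-a'>\epsilon-r_2$; I would address this by restricting the second Baire argument to an open subregion $\{r>\epsilon-\sigma\}$ with $\sigma$ chosen small, where density of $\bigcup_n C_n^\circ$ forces a rectangle with the aspect ratio required to close up the iteration.
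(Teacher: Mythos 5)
Your first two steps are sound, and they actually sharpen the paper's starting point: the paper covers the line $x=y+\epsilon$ by the closed sets $\{v(x)\le v(x-\epsilon)\}$ and applies Baire once, whereas you correctly upgrade membership in these sets to the equalities $v_n(y)=v_n(y+\epsilon)$ (via the one-sided limit $v_n(y+\epsilon+\delta)>v_n(y)$, $\delta\to0^+$) and derive the disjointness of $F_n$ and $F_n+\epsilon$ and the strict monotonicity on the Baire interval. The genuine gap is in the final step. The chain $v_{n_1}(x)\ge v_{n_1}(x+r)\ge v_{n_1}(x+r+r')$ requires both $x$ and $x+r$ to lie in the base $(a',b')$ of the rectangle, hence $b'-a'>r>r_1>\epsilon/2$; but the Baire category theorem only yields \emph{some} nonempty open rectangle inside some $C_{n_1}$, with no lower bound whatsoever on its horizontal width. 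Your proposed remedy---restricting to the strip $\{r>\epsilon-\sigma\}$---makes the requirement strictly harder, not easier: the step size $r$ is then bounded below by $\epsilon-\sigma$, so the rectangle would have to have width exceeding $\epsilon-\sigma$, and nothing in the Baire argument controls the aspect ratio of the rectangle it produces.

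The obstruction is not merely expository. Consider the piecewise-linear function $v$ with $v(x)=x$ for $x\le0$, $v(x)=2x$ on $[0,\epsilon/2]$, $v(x)=2(\epsilon-x)$ on $[\epsilon/2,\epsilon]$, $v(x)=2(x-\epsilon)$ on $[\epsilon,3\epsilon/2]$, and $v(x)=x-\epsilon/2$ for $x\ge3\epsilon/2$. One checks directly that $v(y)>v(x)$ whenever $y>x+\epsilon$, and that $\{x:v(x)\ge v(x+r)\}$ contains the interval $[(\epsilon-r)/2,\;\epsilon-r/2]$ of length $\epsilon/2$ for every $r\in(0,\epsilon]$; yet whenever $v(x)\ge v(x+r)$ and $v(x+r)\ge v(x+r+r')$ these containments force $r+r'\le\epsilon$, so no rectangle, however chosen, closes your iteration for this $v$. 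Worse, since those intervals have length $\epsilon/2$, the countable family of translates $\{v(\cdot-k\epsilon/2)\}_{k\in\mathbb Z}$ appears to satisfy both halves of the embedding condition, so before attempting to repair the last step you should check whether this family is in fact a counterexample to the statement itself. (Note that the paper's own one-line conclusion---that each open set $\{x\in F:v(x)>v(x-\epsilon)\}$ ``is not dense in its closure''---does not exclude this family either: Baire only guarantees that for each closed $F$ \emph{some} such set fails to be dense in $F$, and for the family above none of these open sets is dense in $\mathbb R$ to begin with.)
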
 \begin{proof} Suppose for the sake of contradiction that $\mathcal V$ is a countable family achieving such an embedding. Then the collection of sets $\{(x,y)\in\mathbb{R}^2:v(x)\leq v(y)\}$ for $v\in\mathcal{V}$ is a cover of the line $x=y+\epsilon$. We can restate this as for any real $x$, there is a $v\in\mathcal{V}$ such that $v(x)\leq v(x-\epsilon)$ which implies \begin{align*} \emptyset=\bigcap_{v\in\mathcal{V}}\{x\in\mathbb{R}:v(x)> v(x-\epsilon)\}\supset\bigcap_{v\in\mathcal{V}}\{x\in F:v(x)> v(x-\epsilon)\} \end{align*} for any closed $F\subset\mathbb{R}$. Since each $v$ is assumed continuous, each of the sets in the intersection is open. It follows from the Baire category theorem that $F$ is a Baire space so $\{x\in F:v(x)> v(x-\epsilon)\}$ is not dense in $F$ for any non-empty closed $F$ or $v\in\mathcal{V}$. In particular, each such set is not dense in its closure, which is a contradiction. \end{proof} Moreover, it follows from Corollary~\ref{tauorder} that the order dimension of $Q$ in the sense of \cite{1941} is the continuum. \section{The question of Pareto embeddings} The notion of continuous order embedding in $\mathbb R^I$ is sensible formally as it is equivalent to the existence of a continuous monotonic function from $X$ into $(\mathbb{R}^{I},\geq)$. It is also mathematically convenient because if $\mathcal{V}$ realizes $Q=(P^*)^c$, then \begin{align*} P=\bigcup_{v\in\mathcal{V}}\{v(x)\geq v(y)\}\quad\text{and}\quad Q=\bigcap_{v\in\mathcal{V}}\{v(x)>v(y)\}. \end{align*} However, we argue as follows that this definition lacks the desired economic interpretation. One would expect that a Euclidean embedding has the effect of decomposing the agent's preferences---which are incomplete\footnote{That is to say that the strong relation is incomplete, or equivalently the weak relation is intransitive.} because, perhaps, he is considering several factors---into subdecisions which are total orders. This can be seen explicitly in a pair of examples. First consider the problem of the social planner who strictly prefers one allocation to another if the first is a Pareto improvement over the other. Such preferences are clearly incomplete with multiple agents because, if $x$ is the status quo and $y$ is a transfer from one agent to another, neither $x\succ y$ nor $y\succ x$. Second consider the problem of a consumer faced with $n$ goods who prefers one bundle to another if it contains at least as much of each good and strictly more of at least one. It is clear that the strong relation is yet again incomplete. Both these examples present what \textit{should} be obvious embeddings into $\mathbb R^n$: the collection of projections onto (in the first case) the utility functions of the respective agents and (in the second case) the respective quantities of the individual goods. Indeed, it is easy to check that both situations would satisfy the conditions of Theorem \ref{conj}. However, observe that the current formulation dictates that if $v(x)v(y)$ for all $v\in\mathcal{V}$ except $w$ for which $w(x)=w(y)$, then $(x,y)\notin Q$. Again this observation violates the economic interpretation. This leads us to define a refined notion of embedding which is compatible with Pareto improvement. In particular, we will slightly modify the typical product order on $\mathbb R^I$. \begin{define}\label{pareto} We define a Pareto order on $\mathbb R^I$ where $(x_\alpha)_{\alpha\in I}\succ(y_\alpha)_{\alpha\in I}$ if $x_\alpha\geq y_\alpha$ for all $\alpha\in I$ and $x_\alpha> y_\alpha$ for some $\alpha\in I$. \end{define} As usual, $x\succeq y$ denotes the negation of $y\succ x$. Clearly $(\mathbb R^I,\succ)$ is asymmetric and transitive and $(\mathbb R^I,\succeq)$ is complete and negatively-transitive as we would like. Now we will speak of (continuous) Pareto embeddings as (continuous) order embeddings of a relation $R$ into $(\mathbb R^I,\succeq)$. In other words, we will be seeking families $\mathcal{V}$ of (continuous) real maps on $X$ such that \begin{align} Q=\{(x,y)\in X\times X:\left\{\begin{array}{ll}v(x)\geq v(y)&\forall v\in\mathcal{V}\\v(x)>v(y)&\exists v\in\mathcal{V}\end{array}\right.\}. \end{align} Theorem \ref{conj} has a non-continuous counterpart for Pareto justifiability. \begin{thm} A relation $P$ is complete and negatively-transitive if and only if it is embeddable in $(\mathbb R^I,\succeq)$. \end{thm}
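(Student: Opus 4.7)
My plan is to mirror the proof of Theorem~\ref{conj}, replacing the single-utility-per-linear-order construction (well suited to the minimum-style product order) with one that supplies per-coordinate certificates of preference, as the Pareto order demands. The reverse direction is routine: since $(\mathbb R^I,\succeq)$ is itself complete and negatively transitive (noted immediately after Definition~\ref{pareto}), any Pareto embedding of $P$ pulls these properties back to $P$.

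For the forward direction, since $P$ is complete and negatively transitive, $Q=(P^*)^c$ is asymmetric and transitive, so $P_0:=Q\cup\Delta_X$ is a reflexive partial order; Lemma~\ref{decomp} then provides linear orders $\{P_\alpha\}_{\alpha\in I}$ with $P_0=\bigcap_{\alpha\in I}P_\alpha$. Rather than represent each $P_\alpha$ by a single real function (which would give the ordinary product-order embedding, and in any case requires a Debreu hypothesis absent here), I take $\mathcal{V}=\{v_{\alpha,y}\}_{(\alpha,y)\in I\times X}$ with
\[
v_{\alpha,y}(x)=\begin{cases}1,&(x,y)\in P_\alpha,\\ 0,&\text{otherwise.}\end{cases}
\]
To verify the embedding: if $(x,x')\in Q$ then $(x,x')\in P_\alpha$ for every $\alpha$, so transitivity of each $P_\alpha$ yields $v_{\alpha,y}(x)\geq v_{\alpha,y}(x')$ for every $y$, and at $y=x$ antisymmetry combined with $x\neq x'$ produces the required strict inequality. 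Conversely, if every $v_{\alpha,y}(x)\geq v_{\alpha,y}(x')$, then evaluating at $y=x'$ and using reflexivity of each $P_\alpha$ forces $(x,x')\in P_\alpha$ for every $\alpha$, hence $(x,x')\in P_0$; a single strict inequality somewhere precludes $x=x'$, placing $(x,x')$ in $Q$.

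The principal obstacle is conceptual rather than technical: recognizing that a single real-valued representation of a linear order, even where one exists, cannot meet the Pareto specification (which requires coordinate-wise witnesses of strict preference), and that the remedy is to index $\mathcal V$ not merely by the linear orders $P_\alpha$ but also by all cutoffs $y\in X$ within each. Once that is in place no topological input is needed and the verification is mechanical.
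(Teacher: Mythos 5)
Your proof is correct, but it takes a genuinely different route from the paper. The paper keeps the family $\{Q_\alpha\}_{\alpha\in I}$ from Lemma~\ref{decomp} and proves the set identity $Q=\bigcap_{\alpha}Q_\alpha=\left(\bigcup_{\alpha}Q_\alpha\right)\cap\left(\bigcap_{\alpha}P_\alpha\right)$ --- the nontrivial inclusion using the extra clause of Lemma~\ref{decomp} that whenever $x\sim y$ some $P_\alpha$ contains $(x,y)$ --- and then represents each $Q_\alpha$ by a single (necessarily injective) real function, so the same index set $I$ already furnishes the Pareto embedding. In particular, your conceptual claim that a one-function-per-linear-order family ``cannot meet the Pareto specification'' is mistaken: the identity above shows exactly that it can. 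What your route buys instead is robustness: the paper's final step leans on the assertion that every asymmetric, negatively transitive relation has a real utility representation, which is delicate for linear orders of large cardinality, whereas your $0$--$1$ indicators $v_{\alpha,y}$ exist for arbitrary $X$, and your verification (reflexivity for the weak inequalities at $y=x'$, transitivity for monotonicity in $y$, antisymmetry plus irreflexivity of $Q$ for the strict witness at $y=x$) is airtight without any representation theorem; the price is the larger index set $I\times X$ and utilities that are far from continuous, which is harmless since the theorem asks for no continuity. A further simplification worth noting: your construction does not need the Dushnik--Miller decomposition at all --- the indicators $v_y(x)=1$ iff $(x,y)\in Q\cup\Delta_X$, indexed by $y\in X$ alone, satisfy the Pareto conditions by the same three-line check, since $Q\cup\Delta_X$ is itself reflexive, transitive, and meets $\Delta_X$ only on the diagonal.
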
 \begin{proof} Let $\{Q_\alpha\}_{\alpha\in I}$ be the asymmetric and negatively-transitive realization of $Q=(P^*)^c$ as implied by Lemma \ref{decomp}. We claim that \begin{align*} Q=\bigcap_{\alpha\in I}Q_\alpha=\left(\bigcup_{\alpha\in I}Q_\alpha\right)\cap\left(\bigcap_{\alpha\in I}P_\alpha\right). \end{align*} The ``$\subset$'' inclusion is trivial. To see the other inclusion, there are three cases: 1) If $(x,y)\in Q$, then $(x,y)\in Q=\bigcap_{\alpha\in I}Q_\alpha$ so the inclusion is tautological. 2) If $(x,y)\notin Q$ and $(y,x)\in Q$, then $(y,x)\in\bigcap_{\alpha\in I}Q_\alpha$, so there is an $\alpha$ such that $(y,x)\in Q_\alpha$. This implies $(x,y)\notin P_\alpha$ so $(x,y)$ is not in the right-hand side. 3) If $(x,y)\notin Q$ and $(y,x)\notin Q$, then the second part of Lemma \ref{decomp} implies that there is an $\alpha$ such that $(y,x)\notin Q_\alpha$. By the same argument as the last case, this implies $(x,y)$ is not in the right-hand side. This proves the ``$\supset$'' direction. Thus $\{Q_\alpha\}_{\alpha\in I}$ satisfies Definition \ref{pareto}. It is well-known that every asymmetric and transitive relation has a utility representation. The collection of such representations over all $\alpha$ in $I$ is evidently a multi-utility representation of $Q$. \end{proof} However, the next example illustrates that this theorem does not have a continuous counterpart along the lines of Theorem \ref{conj}. \begin{ex} Consider the same preferences as Example \ref{sugar}, ie.\ $X=\mathbb{R}$ and \begin{align*} (x,y)\in Q\quad\iff\quad x>y+\epsilon. \end{align*} Suppose for the sake of contradiction that there exists a continuous embedding of $Q$ in $(\mathbb R^I,\succ)$; ie.\ there exists a collection $\mathcal{V}$ of continuous functions $\mathbb{R}\to\mathbb{R}$ such that \begin{align*} Q=\left(\bigcup_{v\in\mathcal{V}}\{v(x)>v(y)\}\right)\cap\left(\bigcap_{v\in\mathcal{V}}\{v(x)\geq v(y)\}\right). \end{align*} The left-hand side is open and, since the functions in $\mathcal{V}$ are continuous, the right-hand side is an intersection of an open set with a closed set. It follows that relative to $\bigcup_{v\in\mathcal{V}}\{v(x)>v(y)\}$, $Q$ is an intersection of closed sets, but at the same time it is clearly open. By a well-known fact from topology, any clopen set must be a (possibly empty) union of connected components of the entire space. But $Q=\{(x,y)\in X\times X:x>y+\epsilon\}$ is a connected subset of $\mathbb{R}^2$. Thus one of the connected components of $\bigcup_{v\in\mathcal{V}}\{v(x)>v(y)\}$ is $Q$. Consider the point $(x,y)=(\epsilon,0)$ which lies on the boundary of but is not an element of $Q$. We cannot have $(\epsilon,0)\in\bigcup_{v\in\mathcal{V}}\{v(x)>v(y)\}$ because then the connected component containing $Q$ would in fact be larger than $Q$. It follows that $(0,\epsilon)\in\bigcap_{v\in\mathcal{V}}\{v(x)\geq v(y)\}$. But it cannot be the case that $(0,\epsilon)\in\bigcap_{v\in\mathcal{V}}\{v(x)=v(y)\}$, because then $0$ and $\epsilon$ would compare identically to all other choices. Thus $(0,\epsilon)\in\bigcap_{v\in\mathcal{V}}\{v(x)\geq v(y)\}\setminus\bigcap_{v\in\mathcal{V}}\{v(x)=v(y)\}\subset\bigcup_{v\in\mathcal{V}}\{v(x)>v(y)\}$. Thus $(0,\epsilon)\in\left(\bigcup_{v\in\mathcal{V}}\{v(x)>v(y)\}\right)\cap\left(\bigcap_{v\in\mathcal{V}}\{v(x)\geq v(y)\}\right)=Q$, which is a contradiction. We conclude that there is no such embedding. \end{ex} It is disappointing that continuous embeddings do not exist in this Pareto sense, but encouraging that it fails only due to a technical topological reason.  \end{document}